\documentclass[12pt,a4paper]{article}
\usepackage[top=2.95cm,bottom=2.95cm,left=2.2cm,right=2.2cm]{geometry}
\usepackage{latexsym,amsmath,amsfonts,amssymb,amsthm,amscd,mathrsfs}

\newcommand{\be}{\begin{equation}}
\newcommand{\ee}{\end{equation}}
\newcommand{\bea}{\begin{eqnarray}}
\newcommand{\eea}{\end{eqnarray}}

\numberwithin{equation}{section}
\newcounter{thmcounter}
\numberwithin{thmcounter}{section}
\theoremstyle{definition}
\newtheorem*{acknowledgements}{Acknowledgements}
\newtheorem{remark}[thmcounter]{Remark}
\theoremstyle{plain}
\newtheorem{lemma}[thmcounter]{Lemma}
\newtheorem{proposition}[thmcounter]{Proposition}
\newtheorem{theorem}[thmcounter]{Theorem}

\def\1{{\boldsymbol 1}}                     %
\def\0{{\boldsymbol 0}}                     %
\def\bC{{\boldsymbol C}}                    %
\def\bJ{{\boldsymbol J}}                    %
\def\C{\mathbb{C}}                          %
\def\R{\mathbb{R}}                          %
\def\T{\mathbb{T}}                          %
\def\Z{\mathbb{Z}}                          %
\def\cC{{\mathcal C}}                       %
\def\cH{{\mathcal H}}                       %
\def\cL{{\mathcal L}}                       %
\def\cZ{{\mathcal Z}}                       %
\def\su{\mathfrak{su}}                      %
\def\tr{\mathrm{tr}}                        %
\def\diag{\mathrm{diag}}                    %
\def\ri{\mathrm{i}}                         %
\def\BC{\mathrm{BC}}                        %
\def\SB{\mathrm{SB}}                        %
\def\SL{\mathrm{SL}}                        %
\def\SO{\mathrm{SO}}                        %
\def\UN{\mathrm{U}}                         %
\def\SU{\mathrm{SU}}                        %

\begin{document}

\begin{center}
{\large\bf
The full phase space of a model in the Calogero-Ruijsenaars family}
\end{center}

\vspace{0.2cm}

\medskip
\begin{center}
L.~Feh\'er${}^{a,b}$ and T.F.~G\"orbe${}^a$\\

\bigskip
${}^a$Department of Theoretical Physics, University of Szeged\\
Tisza Lajos krt 84-86, H-6720 Szeged, Hungary\\
e-mail: tfgorbe@physx.u-szeged.hu

\medskip
${}^b$Department of Theoretical Physics, WIGNER RCP, RMKI\\
H-1525 Budapest, P.O.B.~49, Hungary\\
e-mail: lfeher@physx.u-szeged.hu
\end{center}

\vspace{0.2cm}

\medskip
\begin{abstract}
We complete the recent derivation of a Ruijsenaars type system that arises as
a reduction of the natural free system on the Heisenberg double of $\SU(n,n)$.
The previous analysis by Marshall focused on a dense open submanifold of the
reduced phase space, and here we describe the full phase space wherein
Liouville integrability of the system holds by construction.
\end{abstract}

\newpage

\section{Introduction}
\label{sec:1}

The method of Hamiltonian reduction belongs to the set of standard toolkits
applicable to study a great variety of problems ranging from geometric mechanics
to field theory and harmonic analysis \cite{MR,E}. It is especially useful in the theory of
integrable Hamiltonian systems \cite{BBT}, where one of the maxims is that one should view
the systems of interests as reductions of obviously solvable `free' systems \cite{OP1}.
This is often advantageous, for example since the reduction produces global phase
spaces on which the reduced free flows are automatically complete, which is an
indispensable property of any integrable system.

An interesting application of this method appeared in the recent paper \cite{M}
by Marshall, where a deformation of the classical hyperbolic $\BC_n$ Sutherland
system \cite{OP1} was derived by reduction of a free system on the Heisenberg double \cite{STS}
of the Poisson-Lie group $\SU(n,n)$. In a closely related work \cite{FG}, we investigated
the analogous reduction of the Heisenberg double of $\SU(2n)$ and thereby obtained
a deformation of the trigonometric variant of the $\BC_n$ Sutherland system.
In the course of our analysis \cite{FG} we noticed that Marshall's considerations
were restricted to a proper open submanifold of the reduced phase space.
Although this submanifold forms a dense subset, the reduced Hamiltonian flows are
complete only on the full phase space. The goal of this paper is to provide a
globally valid model of the relevant reduced phase space.

Our motivation basically stems from the fact that the global description of the
reduced phase space is a necessary ingredient of the characterization of any
Hamiltonian reduction. Besides, we also intend this complement to the papers
\cite{M,FG} to serve as a starting point for a future work where the duality
aspects of the $\SU(n,n)$ and the $\SU(2n)$ cases should be treated together.
Based on experience \cite{FK1,P1}, ideas for finding the pertinent dual systems are
readily available, but their technical implementation poses a challenging
open problem.

Below, we concentrate on the essential points referring to \cite{M,FG} for more
details. In Section \ref{sec:2} the necessary preliminaries are summarized.
Then in Section \ref{sec:3} we describe the reduced phase space. In Subsection
\ref{subsec:3.1}, we present the local picture of Marshall using a shortcut that
leads to it, and give the global picture in Subsection \ref{subsec:3.2}.
The reader may go directly  to Theorem \ref{thm:3.6} to see the result.
The reduced Hamiltonians and their integrability is briefly discussed in the last
section, and two appendices are included to help readability.
Appendix \ref{sec:A} contains some auxiliary explicit formulas,
while Appendix \ref{sec:B} details a property of the reduced Hamiltonians.

\section{Definitions and first steps}
\label{sec:2}

All of the material presented in this section is adapted from \cite{M} and \cite{FG}.

Fix an integer $n>1$ and introduce the group
\be
\SU(n,n)=\{g\in\SL(2n,\C)\mid g^\dag\bJ g=\bJ\}
\label{T1}
\ee
with $\bJ=\diag(\1_n,-\1_n)$. Then consider the open submanifold
$\SL(2n,\C)'\subset\SL(2n,\C)$ consisting of those elements, $K$,
that admit both Iwasawa-like decompositions of the form
\be
K=g_Lb_R^{-1}=b_Lg_R^{-1},\qquad g_L,g_R\in\SU(n,n),\quad b_L,b_R\in\SB(2n),
\label{T2}
\ee
where $\SB(2n)<\SL(2n,\C)$ is the group of upper triangular matrices having
positive entries along the diagonal. Both decompositions are unique and the
constituent factors depend smoothly on $K\in\SL(2n,\C)'$.
The manifold $\SL(2n,\C)'$ carries the symplectic form \cite{AM}
\be
\omega=\frac{1}{2}\Im\tr(db_Lb_L^{-1}\wedge dg_Lg_L^{-1})
+\frac{1}{2}\Im\tr(db_Rb_R^{-1}\wedge dg_Rg_R^{-1}).
\label{T3}
\ee
On this symplectic manifold, which is a symplectic leaf of the Heisenberg double
of the Poisson-Lie group $\SU(n,n)$, one has the pairwise Poisson
commuting Hamiltonians
\be
\cH_j(K)=\frac{1}{2j}\tr(K\bJ K^\dag\bJ)^j,\qquad j\in\Z^\ast.
\label{T4}
\ee
They generate complete flows that can be written down explicitly (see
Section \ref{sec:4}). We are concerned with a reduction of these Hamiltonians
based on the symmetry group $G_+\times G_+$, where
\be
G_+=\mathrm{S}(\UN(n)\times\UN(n))<\SU(n,n)
\label{T5}
\ee
is the block-diagonal subgroup. Throughout, we refer to the obvious $2\times 2$
block-matrix structure corresponding to $\bJ$. The action of $G_+\times G_+$
on $\SL(2n,\C)'$ is given by the map
\be
G_+\times G_+\times\SL(2n,\C)'\to\SL(2n,\C)'
\label{T6}
\ee
that works according to
\be
(\eta_L,\eta_R,K)\mapsto\eta_LK\eta_R^{-1}.
\label{T7}
\ee
One can check that this map is well-defined, i.e.~$\eta_LK\eta_R^{-1}$ stays in
$\SL(2n,\C)'$, and has the Poisson property with respect to the product Poisson
structure on the left-hand side \cite{STS,AM}, where on $G_+$ the standard Sklyanin bracket
is used and the Poisson structure on $\SL(2n,\C)'$ is engendered by $\omega$.
Moreover, this $G_+\times G_+$ action is associated with a momentum map in the
sense of Lu \cite{Lu}. The momentum map can be written as
$\Phi_+\colon\SL(2n,\C)'\to G_+^\ast\times G_+^\ast$, where $G_+^\ast$ is the
subgroup of $\SB(2n)$ containing the elements with vanishing off-diagonal blocks.
By utilizing the obvious projection $\pi\colon\SB(2n)\to G_+^\ast$, which replaces
the block off-diagonal components by zeroes, the momentum map obeys the formula
\be
\Phi_+(K)=\big(\pi(b_L),\pi(b_R)\big).
\label{T8}
\ee
The Hamiltonians $\cH_j$ \eqref{T4} are invariant with respect to the symmetry
group $G_+\times G_+$ and $\Phi_+$ is constant along their flows.

The general theory \cite{Lu} ensures that one can now perform Marsden-Weinstein
type reduction. This amounts to imposing the constraint
\be
\Phi_+(K)=\mu=(\mu_L,\mu_R)
\label{T9}
\ee
with some constant $\mu\in G_+^\ast\times G_+^\ast$ and then taking the quotient
of $\Phi_+^{-1}(\mu)$ by the corresponding isotropy group, denoted below as $G_\mu$.

We pick the following value $\mu$ of the momentum map,
\be
\mu_L=\begin{bmatrix}e^u\nu(x)&\0_n\\\0_n&e^{-u}\1_n\end{bmatrix},\quad
\mu_R=\begin{bmatrix}e^v\1_n&\0_n\\\0_n&e^{-v}\1_n\end{bmatrix},
\label{T10}
\ee
where $u$, $v$, and $x$ are real constants satisfying
\be
u +v\neq 0,\qquad x> 0,
\label{T11}
\ee
and $\nu(x)$ is the $n\times n$ upper triangular matrix defined by
\be
\nu(x)_{jj}=1,\quad\nu(x)_{jk}=(1-e^{-x})e^{\frac{(k-j)x}{2}},\quad j<k.
\label{T12}
\ee
The essential property of $\nu(x)$ is that $\nu(x)\nu(x)^\dag$ has only two
different eigenvalues, one of them with multiplicity $1$.
(This also holds for $x<0$ and our assumption $x>0$
only serves to keep the text shorter.)
The corresponding isotropy group $G_\mu$ is
\be
G_\mu=G_+(\mu_L)\times G_+,
\label{T13}
\ee
where the elements $\eta_L\in G_+(\mu_L)$ have the form
\be
\eta_L=\begin{bmatrix}\eta_L(1)&\0_n\\\0_n&\eta_L(2)\end{bmatrix}
\label{T14}
\ee
with $\eta_L(1)\in\UN(n)$ satisfying
\be
\eta_L(1)\nu(x)\nu(x)^\dag\eta_L(1)^{-1}=\nu(x)\nu(x)^\dag
\label{T15}
\ee
 and $\eta_L(2)\in\UN(n)$, coupled by $\det(\eta_L)=1$.
It will turn out that the reduced phase space
\be
M=\Phi_+^{-1}(\mu)/G_\mu
\label{T16}
\ee
is a smooth manifold. Our task is to characterize this manifold, which carries the reduced
symplectic form $\omega_M$ defined by the relation
 \be
\iota_\mu^\ast\omega =\pi_\mu^\ast\omega_M,
\label{defred}\ee
where $\iota_\mu\colon\Phi_+^{-1}(\mu)\to\SL(2n,\C)'$ is the tautological
injection and $\pi_\mu\colon\Phi_+^{-1}(\mu)\to M$ is the natural projection.

Consider the following central subgroup $\Z_{2n}$ of $G_+ \times G_+$,
\be
\Z_{2n} = \{ (w \1_{2n}, w\1_{2n})\mid w\in \C,\,\, w^{2n}=1\},
\label{Z2n}\ee
which acts trivially according to \eqref{T7} and is contained in $G_\mu$.
Later we shall refer to the factor group
\be
\bar G_\mu = G_\mu/\Z_{2n}
\label{barGmu}\ee
as the `effective gauge group'. Obviously, we have
$\Phi_+^{-1}(\mu)/G_\mu = \Phi_+^{-1}(\mu)/\bar G_\mu$.

In the end, we shall obtain a model of the quotient space $M$ by explicitly
exhibiting a global cross-section of the orbits of $G_\mu$ in $\Phi_+^{-1}(\mu)$.
The construction uses the generalized Cartan decomposition of $\SU(n,n)$, which
says that every $g\in \SU(n,n)$ can be written as
\be
g=g_+\begin{bmatrix}\cosh q&\sinh q\\ \sinh q&\cosh q\end{bmatrix}h_+,
\label{T17}
\ee
where $g_+,h_+\in G_+$ and $q=\diag(q_1,\dots,q_n)$ is a real diagonal matrix
verifying
\be
q_1\geq\dots\geq q_n\geq 0.
\label{T18}
\ee
The components $q_i$ are uniquely determined by $g$, and yield smooth functions
on the locus where they are all distinct. In what follows we shall often identify
diagonal matrices like $q$ with the corresponding elements of $\R^n$.

As the first step towards describing $M$, we apply the decomposition \eqref{T17}
to $g_L$ in $K=g_Lb_R^{-1}$ and impose the right-handed momentum constraint
$\pi(b_R)=\mu_R$. It is then easily seen that up to $G_\mu$-transformations every
element of $\Phi_+^{-1}(\mu)$ can be represented in the following form:
\be
K=\begin{bmatrix}\rho&\0_n\\\0_n&\1_n\end{bmatrix}
\begin{bmatrix}\cosh q& \sinh q\\ \sinh q&\cosh q\end{bmatrix}
\begin{bmatrix}e^{-v}\1_n&\alpha\\\0_n&e^v\1_n\end{bmatrix}.
\label{T19}
\ee
Here $\rho\in\SU(n)$ and $\alpha$ is an $n\times n$ complex matrix. Referring to
the $2\times 2$ block-matrix notation, we introduce $\Omega=K_{22}$ and record
from \eqref{T19} that
\be
\Omega=(\sinh q)\alpha+e^v\cosh q.
\label{T20}
\ee
It will prove advantageous to seek for $\Omega$ in the polar-decomposed form,
\be
\Omega=\Lambda T,
\label{T21}
\ee
where $T\in\UN(n)$ and $\Lambda$ is a Hermitian, positive semi-definite matrix.

The next step is to implement the left-handed momentum constraint $\pi(b_L)=\mu_L$
by writing $K=b_Lg_R^{-1}$ with
\be
b_L=\begin{bmatrix}e^u\nu(x)&\chi\\\0_n&e^{-u}\1_n\end{bmatrix},
\label{T22}
\ee
where $\chi$ is an unknown $n\times n$ matrix. Then we inspect the components of
the $2\times 2$ block-matrix identity
\be
K\bJ K^\dag=b_L\bJ b_L^\dag,
\label{T23}
\ee
which results by substituting $K$ from \eqref{T19}. We find that the (22) component
of this identity is equivalent to
\be
\Omega\Omega^\dag=\Lambda^2=e^{-2u}\1_n+e^{-2v}(\sinh q)^2.
\label{T24}
\ee
This uniquely determines $\Lambda$ in terms of $q$ and also shows that $\Lambda$ is invertible.
An important consequence of the first condition in \eqref{T11} is that we must have
\be
q_n>0,
\label{T25}
\ee
and therefore $\sinh q$ is an invertible diagonal matrix.
Indeed, if $q_n=0$, then from \eqref{T20} and \eqref{T24} we would get
$(\Omega\Omega^\dag)_{nn}=e^{2v}=e^{-2u}$, which is excluded by \eqref{T11}.

By using the above relations, it is simple algebra to convert the (12)
and the (21) components of the identity \eqref{T23} into the equation
\be
\chi=\rho(\sinh q)^{-1}[e^{-u}\cosh q-e^{u+v}\Omega^\dag].
\label{T26}
\ee
Finally, the (11) entry of the identity \eqref{T23} translates into the following
crucial equation:
\be
\rho(\sinh q)^{-1}T^\dag(\sinh q)^2T(\sinh q)^{-1}\rho^\dag=\nu(x)\nu(x)^\dag.
\label{T27}
\ee
This is to be satisfied by $q$ subject to \eqref{T18}, \eqref{T25} and $T\in\UN(n)$,
$\rho\in\SU(n)$. After finding $q$, $T$ and $\rho$, one can reconstruct $K$
\eqref{T19} by applying the formulas derived above.

From our viewpoint, a key observation is that \eqref{T27} coincides completely with
equation (5.7) in the paper \cite{FK1}, where its general solution was found.
The correspondence between the notations used here and in \cite{FK1} is
\be
(\rho,T,\sinh q)\Longleftrightarrow(k_L,k_R^\dag,e^{\hat p}).
\label{T28}
\ee
For this reason, we introduce the new variable $\hat p\in\R^n$ by the definition
\be
\sinh q_k=e^{\hat p_k},\quad k=1,\dots,n.
\label{T29}
\ee
Because of \eqref{T18} and \eqref{T25}, the variables $\hat p_k$ satisfy
\be
\hat p_1\geq\dots\geq\hat p_n.
\label{T30}
\ee

We do not see an a priori reason why the very different reduction procedures led
to the same equation \eqref{T27} here and in \cite{FK1}. However, we are going to
take full advantage of this situation. We note that essentially every formula written in
this section appears in \cite{M} as well (with slightly different notations),
but in Marshall's work the previously obtained results about the solutions of
\eqref{T27} were not used.

\section{The reduced phase space}
\label{sec:3}

The statement of Proposition \ref{prop:3.3} characterizes a submanifold of $M$
\eqref{T16}, which was erroneously claimed in \cite{M} to be equal to $M$. After
describing this `local picture', we shall present a globally valid model of $M$.

\subsection{The local picture}
\label{subsec:3.1}

By applying results of \cite{FK1,FK2} in
the same way as in \cite{FG}, one can prove the following lemma.

\begin{lemma}
\label{lem:3.1}
The constraint surface $\Phi_+^{-1}(\mu)$ contains an element of the form
\eqref{T19} if and only if $\hat p$ defined by \eqref{T29} lies in the closed
polyhedron
\be
\bar\cC_x=\{\hat p\in\R^n\mid
\hat p_k-\hat p_{k+1}\geq x/2\ (k=1,\dots,n-1)\}.
\label{H1}
\ee
\end{lemma}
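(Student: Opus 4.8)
The plan is to reduce the statement to the solvability of the single matrix equation \eqref{T27} and then to import its known solution theory. For fixed $q$ obeying \eqref{T18} and \eqref{T25}, I claim that a $K$ of the form \eqref{T19} lies in $\Phi_+^{-1}(\mu)$ if and only if \eqref{T27} admits some $\rho\in\SU(n)$ and $T\in\UN(n)$. Necessity is precisely the block analysis of Section \ref{sec:2}: the $(22)$, the $(12)$--$(21)$ and the $(11)$ entries of \eqref{T23} produce \eqref{T24}, \eqref{T26} and \eqref{T27} in turn. For sufficiency I would run this backwards: starting from a solution $(\rho,T)$, define $\Lambda>0$ by \eqref{T24}, set $\Omega=\Lambda T$, recover $\alpha=(\sinh q)^{-1}(\Omega-e^v\cosh q)$ from \eqref{T20} (legitimate since $q_n>0$) and $\chi$ from \eqref{T26}; then the matrix $b_L$ of \eqref{T22} satisfies $\pi(b_L)=\mu_L$, while $\pi(b_R)=\mu_R$ is built into \eqref{T19}, so that $\Phi_+(K)=\mu$. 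Hence the whole lemma is equivalent to determining those $\hat p$ for which \eqref{T27} is solvable.

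The next step is to recast \eqref{T27} spectrally. Writing $L=\sinh q=\diag(e^{\hat p_1},\dots,e^{\hat p_n})$ and $B=LTL^{-1}$, the left-hand side of \eqref{T27} equals $\rho\, B^\dag B\,\rho^\dag$, so the equation says that $B^\dag B$ is unitarily conjugate to $\nu(x)\nu(x)^\dag$; as $\rho$ ranges over $\SU(n)$ this is possible exactly when the singular values of $B=LTL^{-1}$ agree with the square roots of the eigenvalues of $\nu(x)\nu(x)^\dag$. Using the essential property recorded after \eqref{T12}---concretely, that $\nu(x)\nu(x)^\dag-e^{-x}\1_n$ has rank one---a direct computation identifies these eigenvalues as $e^{-x}$ with multiplicity $n-1$ and $e^{(n-1)x}$ with multiplicity one (their product being $\det(\nu(x)\nu(x)^\dag)=1$, and the trace growing like $e^{(n-1)x}$ for large $x$). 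Thus \eqref{T27} is solvable precisely when some $T\in\UN(n)$ makes the singular values of $LTL^{-1}$ equal to $e^{(n-1)x/2}$ (once) and $e^{-x/2}$ ($n-1$ times).

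This is a purely linear-algebraic attainment problem, and it is exactly here that I would invoke \cite{FK1,FK2}: equation \eqref{T27} is equation (5.7) of \cite{FK1} under the dictionary \eqref{T28}, whose general solution was found there and reused in \cite{FG}. The governing fact is a convexity theorem: the logarithmic singular-value vectors of $LTL^{-1}$, as $T$ sweeps $\UN(n)$, fill the convex polytope spanned by the permutation vectors $\big(\hat p_{\sigma(i)}-\hat p_i\big)_{i=1}^n$, $\sigma\in S_n$. Since the prescribed target $\big(\tfrac{(n-1)x}{2},-\tfrac{x}{2},\dots,-\tfrac{x}{2}\big)$ points along the first fundamental weight, testing its membership in this polytope against the appropriate supporting functionals returns precisely the chain of inequalities $\hat p_k-\hat p_{k+1}\ge x/2$, that is $\hat p\in\bar\cC_x$. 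One checks, for instance, that at the boundary where all spacings saturate $x/2$ this target is attained by a single cyclic permutation vector, while it escapes the polytope as soon as one spacing drops below $x/2$.

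The main obstacle, and the one substantive ingredient borrowed from \cite{FK1,FK2}, is this last polytope-membership computation: the cumulative (antisymmetric-power) bounds $\prod_{i=1}^k\sigma_i(B)\le e^{(\hat p_1+\dots+\hat p_k)-(\hat p_{n-k+1}+\dots+\hat p_n)}$ are only necessary and far weaker than the individual gap conditions, so the precise equivalence with $\bar\cC_x$ genuinely relies on the full solution of \cite{FK1}. Two minor points complete the picture: passing from $\UN(n)$- to $\SU(n)$-conjugation for $\rho$ is harmless because an overall phase does not affect conjugation, and the closedness of $\bar\cC_x$ reflects exactly the boundary configurations with coincident $q_i$, where the singular-value multiplicities degenerate but \eqref{T27} remains solvable.
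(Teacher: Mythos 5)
Your first two paragraphs are correct: the reduction of the lemma to the solvability of \eqref{T27} works in both directions (given a solution $(\rho,T)$ one reconstructs $\alpha$, $\chi$, and then $g_R=K^{-1}b_L\in\SU(n,n)$ follows from \eqref{T23}), and the spectrum of $\nu(x)\nu(x)^\dag$ is indeed $e^{-x}$ with multiplicity $n-1$ and $e^{(n-1)x}$ with multiplicity one. Since the paper itself proves the lemma only by citing \cite{FK1,FK2}, everything hinges on your third paragraph, and that is where there is a genuine error: the ``convexity theorem'' you invoke --- that the log-singular-value vectors of $LTL^{-1}$, $T\in\UN(n)$, fill $\mathrm{conv}\{v_\sigma\}$ with $v_\sigma=(\hat p_{\sigma(i)}-\hat p_i)_i$ --- is not the result of \cite{FK1,FK2}, and it is false for $n\ge3$. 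What is true is only the Lidskii--Thompson-type \emph{necessary} condition that attainable vectors lie in this hull; the attainable set is the multiplicative Horn polytope, cut out by many more inequalities and generically strictly smaller. So membership in your polytope does not imply solvability of \eqref{T27}, which is precisely the implication needed for the ``if'' direction of the lemma.

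Concretely, for $n=3$ take $\hat p=(5x/4,\,x,\,0)$, so $\hat p_1-\hat p_2=x/4<x/2$ and $\hat p\notin\bar\cC_x$. The ordering $(-x/2,-x/2,x)$ of your target satisfies $\hat p+(-x/2,-x/2,x)=(3x/4,\,x/2,\,x)$, which is majorized by $\hat p$ (sorted partial sums $x\le 5x/4$, $7x/4\le 9x/4$, $9x/4=9x/4$), hence lies in the permutohedron of $\hat p$; since $\mathrm{conv}\{v_\sigma\}$ is exactly this permutohedron translated by $-\hat p$, the target lies in your polytope even though \eqref{T27} is unsolvable there. (Carrying out your membership test fully for $n=3$ yields the conditions $\hat p_2-\hat p_3\ge x/2$ and $\hat p_1-\hat p_3\ge x$, which is not $\bar\cC_x$ and is not even symmetric under reversing the gaps.) The ingredient that actually proves the lemma --- and is what underlies the solution in \cite{FK1} --- is one you recorded and then abandoned: $\nu(x)\nu(x)^\dag=e^{-x}\1_n+ww^\dag$ is a rank-one positive perturbation of $e^{-x}\1_n$. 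Conjugating \eqref{T27} by $\rho$ and multiplying by $e^{\hat p}$ on both sides turns it into $T^\dag e^{2\hat p}T=e^{-x}e^{2\hat p}+vv^\dag$ with $v=e^{\hat p}\rho^\dag w$, i.e., into the question of whether some rank-one positive update of $e^{-x}e^{2\hat p}$ is unitarily conjugate to $e^{2\hat p}$. By the classical interlacing theorem for rank-one updates this holds if and only if $e^{2\hat p_k-x}\ge e^{2\hat p_{k+1}}$ for $k=1,\dots,n-1$, i.e., iff $\hat p\in\bar\cC_x$; in the example above interlacing indeed fails, since $e^{2\hat p_1-x}=e^{3x/2}<e^{2x}=e^{2\hat p_2}$. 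Replacing your third paragraph by this argument gives a correct, essentially self-contained proof. (A further small slip: on the boundary of $\bar\cC_x$ the $q_i$ are \emph{not} coincident --- neighbouring $\hat p_i$ differ there by exactly $x/2$ --- so your closing remark about degenerate configurations also needs revision.)
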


The polyhedron $\bar\cC_x$ is the closure of its interior, $\cC_x$, defined
by strict inequalities. We note that in \cite{M} the elements of the boundary
$\bar\cC_x\setminus\cC_x$ were omitted.

For any fixed $\hat p\in\bar\cC_x$, one can write down the solutions of \eqref{T27}
for $T$ and $\rho$ explicitly \cite{FK1}. By inserting those into the formula
\eqref{T19}, using the relations \eqref{T20}, \eqref{T21}, \eqref{T24} to determine
the matrix $\alpha$, one arrives at the next lemma. It refers to the $n\times n$
real matrices
\be
\theta(x,\hat p),\quad\zeta(x,\hat p),\quad\kappa(x),
\label{H2}
\ee
which belong to the group $\SO(n)$ and are defined by explicit formulas that can
be found, for example, at the beginning of Section 3.2 in \cite{FG}.
For the reader's convenience, we append these formulas at the end of the text.

\begin{proposition}
\label{prop:3.2}
For any parameters $u,v,x$ subject to \eqref{T11}, and variables
$\hat p\in\bar\cC_x$ and $e^{\ri\hat q}$ from the $n$-torus $\T_n$,
define the matrix
\be
K(\hat p,e^{\ri \hat q})
=\begin{bmatrix}\rho &\0_n\\\0_n&\1_n\end{bmatrix}
\begin{bmatrix}\sqrt{\1_n+e^{2\hat p}} & e^{\hat p}\\
e^{\hat p}&\sqrt{\1_n+e^{2\hat p}}\end{bmatrix}
\begin{bmatrix}e^{-v}\1_n&\alpha\\\0_n&e^v\1_n\end{bmatrix}
\label{H3}
\ee
by employing
\be
\rho=\rho(x,\hat p)=\kappa(x)\zeta(x,\hat p)^{-1}
\label{H4}
\ee
and
\be
\alpha=\alpha(x,u,v,\hat p, e^{\ri \hat q})
=e^{\ri\hat q}\sqrt{e^{-2u}e^{-2\hat p}+ e^{-2v}\1_n}\,\theta(x,\hat p)^{-1}
-e^v\sqrt{e^{-2\hat p}+\1_n}.
\label{H5}
\ee
Then $K(\hat p,e^{\ri\hat q})$ resides in the constraint surface $\Phi_+^{-1}(\mu)$
and the set
\be
S=\{K(\hat p,e^{\ri\hat q})\mid(\hat p,e^{\ri\hat q})\in\bar\cC_x\times\T_n\}
\label{H6}
\ee
intersects every orbit of $G_\mu$ in $\Phi_+^{-1}(\mu)$.
\end{proposition}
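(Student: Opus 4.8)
The statement bundles two separate claims: first, that each matrix $K(\hat p,e^{\ri\hat q})$ defined by \eqref{H3}--\eqref{H5} genuinely lies on the constraint surface $\Phi_+^{-1}(\mu)$; and second, that the resulting set $S$ in \eqref{H6} meets every $G_\mu$-orbit in $\Phi_+^{-1}(\mu)$. My plan is to treat these in turn, leaning on the reduction carried out in Section \ref{sec:2} and on the explicit solution of \eqref{T27} imported from \cite{FK1} via the dictionary \eqref{T28}.

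For the first claim I would observe that $K(\hat p,e^{\ri\hat q})$ is manifestly of the normal form \eqref{T19}, with $\sinh q=e^{\hat p}$ and $\cosh q=\sqrt{\1_n+e^{2\hat p}}$ as in \eqref{T29}, and with $\rho=\kappa(x)\zeta(x,\hat p)^{-1}\in\SO(n)\subset\SU(n)$. The right-handed constraint $\pi(b_R)=\mu_R$ is then automatic from the block-triangular shape of the third factor, so only the left-handed constraint remains. As shown in Section \ref{sec:2}, for an element of the form \eqref{T19} this constraint is equivalent to the two relations \eqref{T24} and \eqref{T27}, the matrix $\chi$ being fixed thereafter by \eqref{T26}. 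The decisive computation is that of $\Omega=K_{22}=(\sinh q)\alpha+e^v\cosh q$: substituting \eqref{H5} and using that $e^{\hat p}$, $e^{\ri\hat q}$ and the diagonal square roots all commute, the two contributions involving $e^v\sqrt{\1_n+e^{2\hat p}}$ cancel and one is left with the clean collapse
\be
\Omega=\Lambda\,e^{\ri\hat q}\theta(x,\hat p)^{-1},\qquad
\Lambda=\sqrt{e^{-2u}\1_n+e^{-2v}(\sinh q)^2}.
\label{planOmega}
\ee
Since $\theta(x,\hat p)\in\SO(n)$ and $e^{\ri\hat q}$ is unitary, this is at once the polar decomposition \eqref{T21} and yields $\Omega\Omega^\dag=\Lambda^2$, i.e.\ exactly \eqref{T24}. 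Reading off $T=e^{\ri\hat q}\theta(x,\hat p)^{-1}$ and inserting it together with $\rho=\kappa\zeta^{-1}$ into \eqref{T27}, the phase $e^{\ri\hat q}$ commutes through $(\sinh q)^2=e^{2\hat p}$ and drops out entirely, so that \eqref{T27} reduces to the purely $\SO(n)$-valued identity satisfied by $\theta$, $\zeta$, $\kappa$ by their very construction in \cite{FK1}. This establishes membership in $\Phi_+^{-1}(\mu)$.

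For the second claim I would start from an arbitrary point of $\Phi_+^{-1}(\mu)$. By the reduction recalled before \eqref{T19}, a suitable $G_\mu$-transformation brings it to the normal form \eqref{T19} for some $q$, $T$, $\rho$ and $\alpha$; the surviving constraints again amount to \eqref{T27} together with \eqref{T24} and \eqref{T26}. Lemma \ref{lem:3.1} then guarantees $\hat p\in\bar\cC_x$, and \cite{FK1} supplies the general solution of \eqref{T27} for this $\hat p$. The plan is to use the residual gauge freedom---the stabilizer of the form \eqref{T19} inside $G_\mu$---to normalize any such solution to $\rho=\kappa\zeta^{-1}$ and to bring the orthogonal part of $T$ to $\theta(x,\hat p)^{-1}$, leaving only the diagonal phase $e^{\ri\hat q}\in\T_n$ undetermined. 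Once $\hat p$ and $T=e^{\ri\hat q}\theta(x,\hat p)^{-1}$ are fixed, $\Lambda$ is prescribed by \eqref{T24} and hence $\Omega=\Lambda T$, so \eqref{T20} forces $\alpha$ to coincide with \eqref{H5}; the point therefore lies on the orbit of $K(\hat p,e^{\ri\hat q})\in S$.

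The routine verifications in the first claim are straightforward once the collapse \eqref{planOmega} is noticed. The genuine obstacle is the normalization step in the second claim: one must pin down the residual gauge group preserving \eqref{T19} and check that its action on the \cite{FK1} solution set of \eqref{T27} consumes all of the freedom beyond the torus, so that \emph{exactly} the parameters $(\hat p,e^{\ri\hat q})\in\bar\cC_x\times\T_n$ remain. I expect this bookkeeping---together with the attendant consistency checks, such as $\rho\in\SU(n)$ and the positivity making $\Lambda$ the genuine polar factor---to be where the real work lies.
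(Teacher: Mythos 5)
Your proposal is correct and takes essentially the same route as the paper: there, Proposition \ref{prop:3.2} is obtained by inserting the general solution of \eqref{T27} from \cite{FK1} (via the dictionary \eqref{T28}) into the normal form \eqref{T19}, with $\alpha$ recovered from \eqref{T20}, \eqref{T21}, \eqref{T24}, the residual gauge freedom accounting for the normalization $\rho=\kappa\zeta^{-1}$, $T=e^{\ri\hat q}\theta(x,\hat p)^{-1}$, and membership in $\Phi_+^{-1}(\mu)$ checked from the constraint equations as indicated in Remark \ref{rem:3.4}. Your explicit collapse $\Omega=\Lambda\, e^{\ri\hat q}\theta(x,\hat p)^{-1}$ is precisely the verification that the paper leaves to the reader there.
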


By arguing verbatim along the lines of \cite{FG}, and referring to \cite{M} for the
calculation of the reduced symplectic form, one can establish the validity of the
subsequent proposition.

\begin{proposition}
\label{prop:3.3}
The effective gauge group $\bar G_\mu$ \eqref{barGmu} acts freely
on $\Phi_+^{-1}(\mu)$ and thus the quotient space $M$
\eqref{T16} is a smooth manifold. The restriction of the natural projection
$\pi_\mu\colon\Phi_+^{-1}(\mu)\to M$ to
\be
S^o=\{K(\hat p,e^{\ri\hat q})\mid(\hat p,e^{\ri\hat q})\in\cC_x\times\T_n\}
\label{H7}
\ee
gives rise to a diffeomorphism between $\cC_x\times\T_n$ and the open, dense
submanifold of $M$ provided by $\pi_\mu(S^o)$. Taking $S^o$ as model of
$\pi_\mu(S^o)$, the corresponding restriction of the reduced symplectic
form $\omega_M$ becomes the Darboux form
\be
\omega_{S^o}=\sum_{k=1}^nd\hat q_k\wedge d\hat p_k.
\label{H8}
\ee
\end{proposition}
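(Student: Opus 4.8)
The plan is to dispatch the three assertions in order: first deduce smoothness of $M$ from freeness of the $\bar G_\mu$-action, then promote the purely set-theoretic statement of Proposition \ref{prop:3.2} to a diffeomorphism by a slice argument, and finally obtain the Darboux form by pulling $\omega$ back to the cross-section. For freeness I would first observe that stabilizers are conjugate along a $G_\mu$-orbit, so by Proposition \ref{prop:3.2} (the set $S$ meets every orbit) it suffices to compute the stabilizer of an arbitrary $K=K(\hat p,e^{\ri\hat q})\in S$. A pair $(\eta_L,\eta_R)\in G_\mu$ fixes $K$ exactly when $\eta_L K=K\eta_R$. Since $\eta_R\in G_+$ is block-diagonal and unitary it commutes with $\bJ$, so this relation forces $\eta_L$ to commute with $K\bJ K^\dag=b_L\bJ b_L^\dag$, using \eqref{T23}.

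The heart of the matter is then to turn the several resulting matrix identities into the conclusion that $(\eta_L,\eta_R)$ is central. Reading off the blocks of $K\bJ K^\dag$ from the explicit $b_L$ in \eqref{T22}, and combining the commutation of $\eta_L(1)$ with $\nu(x)\nu(x)^\dag$ required by \eqref{T15} with its commutation with the $(11)$-block and with the block-diagonality of $\eta_R=K^{-1}\eta_L K$, one peels off successive constraints. The decisive input is that on all of $\bar\cC_x$ the inequalities \eqref{H1} are strict (as $x>0$), so the entries $\sinh q_k=e^{\hat p_k}$ are pairwise distinct and $\sinh q$ is a regular diagonal matrix; this non-degeneracy is what pins $\eta_L,\eta_R$ down to scalars $w\1_{2n}$ with $w^{2n}=1$, i.e.\ to $\Z_{2n}$. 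I expect this stabilizer computation to be the \emph{main obstacle}, since it is the one step genuinely exploiting the fine structure of the cross-section and of $\mu_L$. Freeness on all of $\Phi_+^{-1}(\mu)$ follows, and because $G_\mu$ is compact the action is automatically proper, so $M=\Phi_+^{-1}(\mu)/\bar G_\mu$ is a smooth $2n$-manifold and $\pi_\mu$ a submersion.

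Next I would restrict to $S^o$. The assignment $(\hat p,e^{\ri\hat q})\mapsto K(\hat p,e^{\ri\hat q})$ is a smooth injective parametrization of $S^o$, and the dimension count $\dim M=2n=\dim(\cC_x\times\T_n)$ makes it possible for $\pi_\mu|_{S^o}$ to be a local diffeomorphism: I would verify transversality of $S^o$ to the orbits, so that $T_KS^o\oplus T_K(\text{orbit})=T_K\Phi_+^{-1}(\mu)$ and $\pi_\mu|_{S^o}$ is an immersion, hence a local diffeomorphism by the matching dimensions. Injectivity into $M$ is settled by recovering the parameters from gauge-invariant data: the spectrum of $\Omega\Omega^\dag$ is conjugation-invariant and, by \eqref{T24}, determines $\hat p$; if two points of $S^o$ shared an orbit they would then have equal $\hat p$ and be related by a $\bar\eta\in\bar G_\mu$ fixing $\hat p$, whence the same analysis as in the freeness step forces equal $e^{\ri\hat q}$. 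Thus $\pi_\mu|_{S^o}$ is an open embedding, i.e.\ a diffeomorphism onto the open set $\pi_\mu(S^o)$; density follows since $\cC_x\times\T_n$ is dense in $\bar\cC_x\times\T_n$ and $\pi_\mu(S)=M$.

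Finally, because $\pi_\mu|_{S^o}$ is a diffeomorphism, the defining relation \eqref{defred} identifies the restriction of $\omega_M$ with $\iota_\mu^\ast\omega$ pulled back to $S^o$. Concretely I would substitute the decomposition factors of $K(\hat p,e^{\ri\hat q})$ — namely $g_L,b_R$ from \eqref{T19} and $b_L,g_R$ from \eqref{T22} — into \eqref{T3} and simplify. The expected mechanism is that the two $\Im\tr$ terms collapse after using \eqref{T24}, \eqref{T26} and the $\SO(n)$-valued data \eqref{H2}, leaving precisely $\sum_k d\hat q_k\wedge d\hat p_k$; this is the reduced-form computation of \cite{M}. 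It is lengthy but, unlike the freeness step, essentially mechanical once the explicit factors are at hand.
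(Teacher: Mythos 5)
Your overall skeleton is reasonable and in places matches what the paper and its sources actually do: reducing freeness to a stabilizer computation on the cross-section $S$ is legitimate (stabilizers are conjugate along orbits and $S$ meets every orbit by Proposition \ref{prop:3.2}); recovering $\hat p$ from the gauge-invariant spectrum of $\Omega\Omega^\dag$ via \eqref{T24} is correct; the density argument is fine; and deferring the Darboux-form computation \eqref{H8} to \cite{M} is exactly what the paper does (its ``proof'' of Proposition \ref{prop:3.3} is a citation of \cite{FG} and \cite{M}, with the freeness argument supplied later, inside the proof of Theorem \ref{thm:3.6}).

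The genuine gap is that the stabilizer computation --- which you yourself flag as ``the main obstacle'' --- is never carried out, and it is precisely the mathematical content of the proposition: both the freeness claim and your injectivity step (which explicitly defers back to ``the same analysis as in the freeness step'') rest on it. Moreover, the route you sketch is not shown to close and is weaker than needed. Commutation of $\eta_L$ with $K\bJ K^\dag=b_L\bJ b_L^\dag$ together with \eqref{T15} only tells you that $\eta_L(1)$ lies in the commutant of $\nu(x)\nu(x)^\dag$ (a copy of $\UN(1)\times\UN(n-1)$, since $\nu\nu^\dag$ has just two eigenvalues) and commutes with $\chi\chi^\dag$, while $\eta_L(1)\chi\eta_L(2)^{-1}=\chi$; to descend from this to scalars you would have to exploit the explicit form of $\chi$ in \eqref{T26}, and you give no mechanism for doing so --- asserting that distinctness of the $e^{\hat p_k}$ ``pins everything down'' is a hope, not an argument.

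For comparison, the mechanism that actually works (used by the paper at the end of the proof of Theorem \ref{thm:3.6}, and in \cite{FG} for the analogue of this proposition) is different and cleaner: uniqueness of the Iwasawa decomposition $K=g_Lb_R^{-1}$ splits the fixed-point equation $\eta_LK\eta_R^{-1}=K$ into $\eta_Lg_L\eta_R^{-1}=g_L$ and $\eta_R(1)\alpha\,\eta_R(2)^{-1}=\alpha$ (note that $\eta_Rb_R\eta_R^{-1}\in\SB(2n)$ because the diagonal blocks of $b_R$ are scalar); then the uniqueness properties of the generalized Cartan decomposition \eqref{T17}, applicable since $q$ is regular ($q_1>\dots>q_n>0$ by \eqref{T25} and $\hat p\in\bar\cC_x$), force $\eta_R(1)=\eta_R(2)=m\in\T_n$ and tie $\eta_L$ to $m$; finally the subdiagonal entries $\alpha_{k+1,k}$, whose moduli are positive multiples of $|\theta(x,\hat p)_{k,k+1}|\neq 0$ on all of $\bar\cC_x$ by \eqref{A1}, force $m=C\1_n$, whence $(\eta_L,\eta_R)=C(\1_{2n},\1_{2n})\in\Z_{2n}$. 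Without this (or an equivalent) argument actually carried through, your proposal proves neither freeness nor the injectivity of $\pi_\mu|_{S^o}$. A minor further remark: the separate transversality verification you promise is unnecessary --- once the Darboux form is imported from \cite{M}, nondegeneracy of the pulled-back $2$-form already forces $\pi_\mu|_{S^o}$ to be an immersion, and the dimension count does the rest.
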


\begin{remark}
\label{rem:3.4}
In the formula \eqref{H3} $K(\hat p,e^{\ri\hat q})$ appears in the decomposed form
$K=g_Lb_R^{-1}$ and it is not immediately obvious that it belongs to $\SL(2n,\C)'$,
i.e., that it can be decomposed alternatively as $b_Lg_R^{-1}$.
However, by defining $b_L(\hat p, e^{\ri \hat q})\in \SB(2n)$ by the formula
\eqref{T22} using $\chi$ in \eqref{T26} with the change of variables
$\sinh q=e^{\hat p}$, the matrix $\rho$ as given above, and
$T=e^{\ri\hat q}\theta(x,\hat p)^{-1}$ that enters \eqref{H3}, we can verify that for
these elements $g_R^{-1}=b_L^{-1}K$ satisfies the defining relation of $\SU(n,n)$
\eqref{T1}, as required. The reader may perform this verification, which relies
only on the constraint equations displayed in Section \ref{sec:2}.
\end{remark}

\subsection{The global picture}
\label{subsec:3.2}

The train of thought leading to the construction below can be outlined as follows.
Proposition \ref{prop:3.3} tells us, in particular, that any $G_\mu$-orbit passing
through $S^o$ intersects $S^o$ in a single point. Direct inspection shows that the
analogous statement is false for $S\setminus S^o$, which corresponds to
$(\bar\cC_x\setminus\cC_x)\times\T_n$ in a one-to-one manner.
Thus a global model of $M$ should result by identifying those points of
$S\setminus S^o$ that lie on the same $G_\mu$-orbit. By using the bijective map
from $\bar\cC_x\times\T_n$ onto $S$ given by the formula \eqref{H3}, the desired
identification will be achieved by constructing such complex variables out of
$(\hat p,e^{\ri\hat q})\in\bar\cC_x\times\T_n$ that coincide precisely for gauge
equivalent elements of $S$.

Turning to the implementation of the above plan, we introduce the space of complex
variables
\be
\hat M_c=\C^{n-1}\times\C^\times,\qquad(\C^\times=\C\setminus\{0\}),
\label{hatM}
\ee
carrying the symplectic form
\be
\hat\omega_c=\ri\sum_{j=1}^{n-1}dz_j\wedge d\bar z_j
+\frac{\ri dz_n\wedge d\bar z_n}{2z_n\bar z_n}.
\label{hatom}
\ee
We also define the surjective map
\be
\hat\cZ_x\colon\bar\cC_x\times\T_n\to\hat M_c,\quad
(\hat p,e^{\ri\hat q})\mapsto z(\hat p,e^{\ri\hat q})
\label{H11}
\ee
by setting
\be
\begin{split}
z_j(\hat p,e^{\ri\hat q})
&=(\hat p_j-\hat p_{j+1}-x/2)^{\tfrac{1}{2}}\prod_{k=j+1}^ne^{\ri\hat q_k},
\quad j=1,\dots,n-1,\\
z_n(\hat p,e^{\ri\hat q})&=e^{-\hat p_1}\prod_{k=1}^ne^{\ri\hat q_k}.
\end{split}
\label{H12}
\ee
The restriction $\cZ_x$ of $\hat\cZ_x$ to $\cC_x\times\T_n$ is a diffeomorphism
onto the open subset
\be
\hat M_c^o=\bigg\{z\in\hat M_c\,\bigg\vert\,\prod_{j=1}^{n-1} z_j\neq 0\bigg\},
\label{H13}
\ee
and it verifies the relation
\be
\cZ_x^\ast\hat\omega_c=\sum_{k=1}^nd\hat q_k\wedge d\hat p_k.
\label{densehatom}
\ee
Thus we manufactured a change of variables
$\cC_x\times\T_n\longleftrightarrow\hat M_c^o$.
The inverse $\cZ_x^{-1}\colon\hat M_c^o\to\cC_x\times\T_n$ involves the functions
\be
\hat p_1(z)=-\log|z_n|,\quad
\hat p_j(z)=-\log|z_n|-\sum_{k=1}^{j-1}(|z_k|^2+x/2)\quad (j=2,\dots,n).
\ee
These extend smoothly to $\hat M_c$ wherein $\hat M_c^o$ sits as a dense
submanifold.

Now we state a lemma, which is a simple adaptation from \cite{FK1,RIMS95}.

\begin{lemma}
\label{lem:3.5}
By using the shorthand $\sigma_j=\prod_{k=j+1}^ne^{\ri\hat q_k}$ for $j=1,\ldots, n-1$ (cf.~\eqref{H12}), let us define
\be
\sigma_+( e^{\ri \hat q})=\diag(\sigma_1,\dots,\sigma_{n-1},1)
\quad\text{and}\quad
\sigma_-(e^{\ri \hat q})=\diag(1,\sigma_1^{-1},\dots,\sigma_{n-1}^{-1}).
\label{H16}
\ee
Then there exist unique smooth functions $\hat\zeta(x,z)$, $\hat\theta(x,z)$ and
$\hat \alpha(x,u,v, z)$ of $z\in \hat M_c$ that satisfy the following identities
for any $(\hat p,e^{\ri\hat q})\in\bar\cC_x\times\T_n$:
\begin{align}
\hat\zeta(x,z(\hat p,e^{\ri \hat q}))
&=\sigma_+(e^{\ri \hat q})\zeta(x,\hat p)\sigma_+( e^{\ri \hat q})^{-1},
\label{H19}\\
\hat\theta(x,z(\hat p,e^{\ri \hat q}))
&=\sigma_+(e^{\ri \hat q})\theta(x,\hat p) \sigma_-(e^{\ri \hat q}),
\label{H20}\\
\hat\alpha(x,u,v, z(\hat p,e^{\ri \hat q}))
&=\sigma_+(e^{\ri \hat q})
\alpha(x,u,v,\hat p, e^{\ri \hat q}) \sigma_+(e^{\ri \hat q})^{-1}.
\label{H21}
\end{align}
Here we refer to the functions on $\bar\cC_x\times\T_n$
displayed in equations \eqref{H2} and \eqref{H5}.
\end{lemma}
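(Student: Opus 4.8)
The plan is to handle uniqueness by a soft density argument and then to devote essentially all effort to the existence of \emph{smooth} extensions, which is where the content of the lemma sits. Since $\cZ_x$ restricts to a diffeomorphism of $\cC_x\times\T_n$ onto the dense open subset $\hat M_c^o$ of \eqref{H13}, the identities \eqref{H19}--\eqref{H21} already pin down $\hat\zeta,\hat\theta,\hat\alpha$ on $\hat M_c^o$: for $z\in\hat M_c^o$ one writes $(\hat p,e^{\ri\hat q})=\cZ_x^{-1}(z)$ and \emph{defines} $\hat\zeta(x,z)=\sigma_+\zeta(x,\hat p)\sigma_+^{-1}$, and analogously for $\hat\theta,\hat\alpha$. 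As $\hat M_c^o$ is dense in $\hat M_c$, continuous extensions are unique; hence it suffices to prove that these pushed-forward functions extend smoothly across the locus $\{\prod_{j<n}z_j=0\}$. Once smoothness (so continuity) of the extensions is established, the identities over the boundary $(\bar\cC_x\setminus\cC_x)\times\T_n$ follow by approaching a boundary point from within $\cC_x$ at fixed $e^{\ri\hat q}$ and passing to the limit, using the continuity of $\zeta,\theta,\alpha$ on $\bar\cC_x$ and of the extended hatted functions.

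The mechanism driving the smooth extension rests on two observations. First, the defining equation \eqref{T27} is invariant under the overall shift $\hat p\mapsto\hat p+c\1$, because the powers of $e^{\hat p}$ on its left-hand side sum to zero; consequently its solutions, and with them $\zeta(x,\hat p)$ and $\theta(x,\hat p)$, depend on $\hat p$ only through the gap variables $\delta_j=\hat p_j-\hat p_{j+1}-x/2=|z_j|^2$ for $j=1,\dots,n-1$. Second, and crucially, one has a \emph{phase-weight matching}. Conjugation by the diagonal matrices $\sigma_\pm$ of \eqref{H16} multiplies the $(a,b)$ entry by a unimodular weight which, written through $z$, equals $\tfrac{z_a}{|z_a|}\cdot\tfrac{\bar z_b}{|z_b|}$ for $\hat\zeta$ (and the analogous expression with a shifted index for $\hat\theta$, where $\sigma_-$ enters). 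Since this weight depends only on the two endpoint phases and not on any intermediate $\sigma_k$, a free phase $\sigma_j$ coming from a vanishing $z_j$ either does not occur in the weight at all, or it sits at an endpoint, in which case the matching $\sqrt{\delta_j}$-factor carried by that entry supplies $z_j$ (or $\bar z_j$) and annihilates the entry. Inspecting the explicit $\SO(n)$-valued formulas for $\zeta,\theta$ appended below, each entry is seen to carry exactly the product of half-integer powers $\sqrt{\delta_k}=|z_k|$ needed to cancel the denominators of its weight, so that the weight and these radicals reassemble into honest holomorphic factors $z_k$ and antiholomorphic factors $\bar z_k$.

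The outcome is that every entry of $\hat\zeta$ and $\hat\theta$ is a finite sum of terms, each a monomial in the $z_k,\bar z_k$ ($k\le n-1$) times a smooth function of the $\delta_k=|z_k|^2$, and such expressions are manifestly smooth in the real coordinates $(\mathrm{Re}\,z_k,\mathrm{Im}\,z_k)$. It remains only to check that the coefficient functions of the $\delta$'s are themselves regular; this is ensured by $x>0$, which keeps the true gaps $\hat p_j-\hat p_{j+1}=\delta_j+x/2\geq x/2$ bounded away from zero, so that the sums and differences of the $\hat p$'s appearing in the denominators of the formulas stay strictly positive on $\bar\cC_x$. For $\hat\alpha$ one conjugates the explicit expression \eqref{H5}: the diagonal $\hat p$-dependent factors commute through $\sigma_+$, the matrix $e^{\ri\hat q}$ merges with the neighbouring $\sigma$'s into shifted phases, and $\theta^{-1}$ is absorbed into the already-understood structure of $\hat\theta$. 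Here the dependence on $\hat p_1=-\log|z_n|$ and on $\arg z_n$ is harmless, precisely because $z_n\in\C^\times$ never vanishes.

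The main obstacle is the phase-weight matching of the second paragraph: the entry-by-entry verification, from the explicit formulas for $\zeta$ and $\theta$, that the half-integer $\sqrt{\delta_k}$-content of each matrix element is \emph{exactly} the one dictated by its conjugation weight, so that no unpaired modulus $|z_k|=\sqrt{z_k\bar z_k}$ survives to spoil smoothness. Everything else is soft. This computation is, up to notation, the one carried out in \cite{FK1,RIMS95}, which is why the lemma is a simple adaptation of results found there.
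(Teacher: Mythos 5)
Your proposal is correct and follows essentially the same route as the paper: define $\hat\zeta,\hat\theta,\hat\alpha$ on the dense open set $\hat M_c^o$ by pushing forward through the diffeomorphism $\cZ_x$ (which also settles uniqueness), then observe that the explicit entries extend smoothly across $\{\prod_{j<n}z_j=0\}$, the paper delegating this last point to Definition 3.3 of \cite{FK1} and to the formulas \eqref{H22}--\eqref{H24} for $\hat\alpha$. Your ``phase-weight matching'' paragraph in fact spells out, correctly, the cancellation mechanism (each $\sqrt{\delta_k}=|z_k|$ factor in an entry of $\zeta,\theta$ pairing with the unimodular conjugation weight to form $z_k$ or $\bar z_k$) that the paper leaves implicit in its citation of \cite{FK1,RIMS95}.
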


The explicit formulas of the functions on $\hat M_c$ that appear in the above
identities are easily found by first determining them on $\hat M_c^o$ using the
change of variables $\cZ_x$, and then noticing that they automatically extend to
$\hat M_c$. The expressions of the functions $\hat\zeta$ and $\hat\theta$, which
depend only on $z_1,\dots,z_{n-1}$, are the same as given in Definition 3.3
in \cite{FK1}. (For most purposes the above definitions and the formulas of
Appendix \ref{sec:A} suffice.) As for $\hat\alpha$, by defining
\be
\Delta(z)=\diag(z_n,e^{-\hat p_2(z)},\dots,e^{-\hat p_n(z)})
\label{H22}
\ee
we have
\be
\hat\alpha(x,u,v,z)=\sqrt{e^{-2v}e^{2\hat p(z)}+e^{-2u}\1_n}\,
\Delta(z)\hat \theta(x,z)^{-1}-e^v\sqrt{e^{-2\hat p(z)}+\1_n}
\label{H23}
\ee
that satisfies relation \eqref{H21} due to the identity
\be
\Delta(z(\hat p, e^{\ri \hat q}))=e^{-\hat p}e^{\ri \hat q}\sigma_+(e^{\ri \hat q})\sigma_-(e^{\ri \hat q}),
\qquad\forall (\hat p,e^{\ri\hat q})\in\bar\cC_x\times\T_n.
\label{H24}
\ee

With these preparations at hand, we can formulate the main result of this paper.

\begin{theorem}
\label{thm:3.6}
Define the smooth map $\hat K\colon\hat M_c\to\SL(2n,\C)'$ by the formula
\be
\hat K(z)
=\begin{bmatrix}\kappa(x)\hat\zeta(x,z)^{-1}&\0_n\\\0_n&\1_n\end{bmatrix}
\begin{bmatrix}\sqrt{\1_n+e^{2\hat p(z)}}&e^{\hat p(z)}\\
e^{\hat p(z)}&\sqrt{\1_n+e^{2\hat p(z)}}\end{bmatrix}
\begin{bmatrix}e^{-v}\1_n&\hat\alpha(x,u,v,z)\\\0_n&e^v\1_n\end{bmatrix}.
\label{H25}
\ee
The image of $\hat K$ belongs to the submanifold $\Phi_+^{-1}(\mu)$ and
the induced mapping $\pi_\mu \circ \hat K$, obtained by using the natural
projection $\pi_\mu\colon\Phi_+^{-1}(\mu)\to M=\Phi^{-1}(\mu)/G_\mu$,
is a symplectomorphism between $(\hat M_c,\hat\omega_c)$, defined by \eqref{hatM}-\eqref{hatom},
and the reduced phase space $(M,\omega_M)$.
\end{theorem}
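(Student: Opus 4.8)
The plan is to establish the theorem by leveraging Proposition \ref{prop:3.3} together with the change of variables $\cZ_x$ and the equivariance identities of Lemma \ref{lem:3.5}, so that the already-understood dense picture extends smoothly to all of $\hat M_c$. First I would verify that $\hat K(z)\in\Phi_+^{-1}(\mu)$ for every $z\in\hat M_c$. On the dense open set $\hat M_c^o$, the diffeomorphism $\cZ_x$ lets me write $z=\cZ_x(\hat p,e^{\ri\hat q})$, and then I would compare $\hat K(z)$ with $K(\hat p,e^{\ri\hat q})$ from \eqref{H3}. Using \eqref{H19} and \eqref{H21} one sees that $\hat K(\cZ_x(\hat p,e^{\ri\hat q}))$ differs from $K(\hat p,e^{\ri\hat q})$ only by conjugation by the block-diagonal unitary $\diag(\sigma_+,\sigma_+)\in G_+$ acting on the left together with a compensating $\sigma_+$ on the right; more precisely it is gauge-equivalent to $K(\hat p,e^{\ri\hat q})$ under $G_\mu$. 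Since $K(\hat p,e^{\ri\hat q})\in\Phi_+^{-1}(\mu)$ by Proposition \ref{prop:3.2} and $\Phi_+^{-1}(\mu)$ is $G_\mu$-invariant, $\hat K(z)\in\Phi_+^{-1}(\mu)$ holds on $\hat M_c^o$; by continuity of $\hat K$ and closedness of the constraint surface it holds on all of $\hat M_c$.

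Next I would show $\pi_\mu\circ\hat K$ is a bijection onto $M$. For surjectivity I would invoke Proposition \ref{prop:3.2}, which guarantees that $S=\{K(\hat p,e^{\ri\hat q})\}$ meets every $G_\mu$-orbit; since $\hat\cZ_x$ is surjective onto $\hat M_c$ and $\hat K\circ\hat\cZ_x$ agrees with $K$ up to gauge, the image of $\pi_\mu\circ\hat K$ exhausts $M$. For injectivity, the essential point is precisely the design of the map \eqref{H12}: I must check that $K(\hat p,e^{\ri\hat q})$ and $K(\hat p',e^{\ri\hat q'})$ lie on the same $G_\mu$-orbit if and only if $z(\hat p,e^{\ri\hat q})=z(\hat p',e^{\ri\hat q'})$. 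The ``only if'' direction on the open stratum is supplied by Proposition \ref{prop:3.3} via $\cZ_x$. The genuinely new content lies on the boundary $(\bar\cC_x\setminus\cC_x)\times\T_n$: here I would analyze the residual gauge freedom coming from the stabilizer condition \eqref{T15}, showing that when some differences $\hat p_j-\hat p_{j+1}$ hit the minimal value $x/2$ the matrix $\rho$ acquires extra symmetry so that distinct tori angles become gauge-equivalent, and that the variables $z_j$ in \eqref{H12}, built so as to vanish exactly on that locus and to absorb the corresponding angle, separate orbits precisely. This collapsing of the redundant angles is what makes $z$ well-defined on orbits.

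The symplectomorphism property I would then obtain by combining \eqref{densehatom} with Proposition \ref{prop:3.3}. On the dense set $\hat M_c^o$ we have, by $\cZ_x^\ast\hat\omega_c=\sum_k d\hat q_k\wedge d\hat p_k=\omega_{S^o}$ and the identification of $\omega_{S^o}$ with $\omega_M$, that $(\pi_\mu\circ\hat K)^\ast\omega_M=\hat\omega_c$. Since both $\hat\omega_c$ and the pullback $(\pi_\mu\circ\hat K)^\ast\omega_M$ are smooth two-forms that agree on the dense open subset $\hat M_c^o$, they agree everywhere on $\hat M_c$ by continuity. Together with the bijectivity established above and the fact that a smooth bijection pulling back the symplectic form is automatically an immersion (hence a diffeomorphism between manifolds of equal dimension), this yields the desired symplectomorphism.

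I expect the main obstacle to be the injectivity analysis on the boundary stratum, i.e.\ proving that the complex coordinates \eqref{H12} separate the $G_\mu$-orbits exactly. This requires understanding the enhanced isotropy dictated by \eqref{T15} when consecutive $\hat p$-gaps degenerate to $x/2$, and checking that the specific combination of $\hat p_j-\hat p_{j+1}-x/2$ under a square root times the product of phases $\prod_{k>j}e^{\ri\hat q_k}$ both vanishes where the angle becomes unphysical and remains a faithful invariant elsewhere. The smoothness and immersion property of $\hat K$ across the boundary is a secondary technical point, handled by the smooth extension of $\hat\zeta,\hat\theta,\hat\alpha$ asserted in Lemma \ref{lem:3.5} and the explicit nonvanishing of $\hat\omega_c$, which guarantees the rank is maximal throughout.
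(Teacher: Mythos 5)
Your overall architecture matches the paper's: membership of $\hat K(z)$ in $\Phi_+^{-1}(\mu)$ via the gauge relation coming from Lemma \ref{lem:3.5} (the paper obtains this on all of $\hat M_c$ at once from the identity \eqref{H26}, which holds for every $(\hat p,e^{\ri\hat q})\in\bar\cC_x\times\T_n$, rather than by your density-plus-closedness argument, but both routes work); surjectivity of $\pi_\mu\circ\hat K$ from Proposition \ref{prop:3.2}; the pullback identity $(\pi_\mu\circ\hat K)^\ast\omega_M=\hat\omega_c$ proved on $\hat M_c^o$ and extended by smoothness; and the final step that a bijective map pulling back a symplectic form to a symplectic form is a symplectomorphism.

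However, there is a genuine gap at the heart of the argument: injectivity. You correctly reformulate it (two points of $S$ are $G_\mu$-related if and only if their $z$-images coincide), and correctly note that Proposition \ref{prop:3.3} settles only the case where both points lie over $\cC_x\times\T_n$; but for the boundary you offer only a plan (``analyze the residual gauge freedom coming from \eqref{T15}'', ``$\rho$ acquires extra symmetry'', ``check that $z$ separates orbits precisely'') and you yourself flag this as ``the main obstacle''. That analysis is precisely the new mathematical content of Theorem \ref{thm:3.6} beyond the earlier propositions, so leaving it as a heuristic renders the proof incomplete. Note also that your stratified scheme would in addition have to exclude a $G_\mu$-orbit meeting both $S^o$ and $S\setminus S^o$ (the mixed case), which you do not address. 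The paper avoids stratification entirely with a uniform argument: assuming $\pi_\mu(\hat K(z))=\pi_\mu(\hat K(z'))$, it reads off from the factorized form \eqref{H25} and uniqueness of the decomposition $K=g_Lb_R^{-1}$ that $g_L(z')=\eta_L g_L(z)\eta_R^{-1}$ and $\hat\alpha(z')=\eta_R(1)\hat\alpha(z)\eta_R(2)^{-1}$; uniqueness properties of the generalized Cartan decomposition \eqref{T17} then force $\hat p(z')=\hat p(z)$ and $\eta_R(1)=\eta_R(2)=m\in\T_n$; the never-vanishing subdiagonal entries $\hat\alpha_{k+1,k}$, which depend on $z$ only through $\hat p(z)$, force $m=C\1_n$ and hence $\hat\alpha(z')=\hat\alpha(z)$; finally the explicit relations $\hat\theta(x,z)_{nk}=\bar z_{k-1}F_k(x,\hat p(z))$ with $F_k>0$, together with $\Delta(z)=\Delta(z')$, recover $z=z'$ coordinate by coordinate. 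Some such concrete computation (or a completed version of your stabilizer analysis on the boundary strata) is indispensable; without it your proposal establishes everything except the one point that distinguishes the global statement from the already-known local one.
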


\begin{proof}
We start by pointing out that for any
$(\hat p,e^{\ri\hat q})\in\bar\cC_x\times\T_n$ the identity
\be
\hat K(z(\hat p,e^{\ri\hat q}))
=\begin{bmatrix}\kappa(x) \sigma_+(e^{\ri \hat q})\kappa(x)^{-1}&\0_n\\
\0_n& \sigma_+(e^{\ri \hat q}) \end{bmatrix}K(\hat p,e^{\ri\hat q})
\begin{bmatrix} \sigma_+(e^{\ri \hat q}) &\0_n\\
\0_n&\ \sigma_+(e^{\ri \hat q}) \end{bmatrix}^{-1}
\label{H26}
\ee
is equivalent to the identities listed in Lemma \ref{lem:3.5}.
We see from this that $\hat K(z(\hat p,e^{\ri\hat q}))$ is a $G_\mu$-transform
of $K(\hat p, e^{\ri \hat q})$ \eqref{H3}, and thus $\hat K(z)$ belongs to
$\Phi_+^{-1}(\mu)$. Indeed, the right-hand side of \eqref{H26} can be written
as $\eta_L K(\hat p,e^{\ri\hat q}) \eta_R^{-1}$ with
\be
\eta_L=c\begin{bmatrix}\kappa(x)\sigma_+(e^{\ri \hat q})\kappa(x)^{-1}&\0_n\\
\0_n&\sigma_+(e^{\ri \hat q})\end{bmatrix},
\qquad
\eta_R=c\begin{bmatrix} \sigma_+(e^{\ri \hat q})&\0_n\\
\0_n&\sigma_+(e^{\ri \hat q})\end{bmatrix},
\label{H27}
\ee
where $c$ is a scalar ensuring $\det(\eta_L)=\det(\eta_R)=1$, and one can check
(see the last paragraph of Appendix A)
that this $(\eta_L,\eta_R)$ lies in the group $G_\mu$ \eqref{T13}.

To proceed further, we let $\hat K_o$ denote the restriction
of $\hat K$ to the dense open subset $\hat M_c^o$ and also let
$K_o\colon\cC_x\times\T_n\to\SL(2n,\C)'$ denote the map
defined by the corresponding restriction of the formula \eqref{H3}.
Notice that, in addition to \eqref{defred}, we have the relations
\be
\pi_\mu\circ\hat K_0=\pi_\mu\circ K_o \circ\cZ_x^{-1}
\quad\hbox{and}\quad
(\pi_\mu\circ K_o)^\ast\omega_M=\sum_{k=1}^nd\hat q_k\wedge d\hat p_k,
\label{H29}\ee
which follow from \eqref{H26} and the last sentence of Proposition \ref{prop:3.3}.
By using \eqref{densehatom} (together with
 $\hat K_o=\iota_\mu\circ\hat K_o$ and $K_o=\iota_\mu\circ K_o$)
the above relations imply the restriction of the equality
\be
(\pi_\mu\circ\hat K)^\ast\omega_M=\hat\omega_c
\label{H30}\ee
on $\hat M_c^o$. This equality is then valid on the full $\hat M_c$
since the 2-forms concerned are smooth.

It is a direct consequence of \eqref{H26} and
Proposition \ref{prop:3.2} that $\pi_\mu \circ \hat K$
is surjective.
Since, on account of \eqref{H30}, it is a local diffeomorphism,
it only remains to demonstrate
that the map $\pi_\mu \circ \hat K$ is injective. The relation
$\pi_\mu(\hat K(z))=\pi_\mu(\hat K(z'))$ for $z,z'\in\hat M_c$ requires that
\be
\hat K(z')=\begin{bmatrix}\eta_L(1)&\0_n\\\0_n&\eta_L(2)\end{bmatrix}
\hat K(z)\begin{bmatrix}\eta_R(1) &\0_n\\\0_n&\eta_R(2)\end{bmatrix}^{-1}
\label{H31}
\ee
for some $(\eta_L,\eta_R)\in G_\mu$. Supposing that \eqref{H31} holds,
 application of the decomposition $\hat K(z) = g_L(z) b_R(z)^{-1}$ to the formula \eqref{H25}
implies that
\be
\hat\alpha(z')=\eta_R(1)\hat\alpha(z)\eta_R(2)^{-1}
\label{H33}
\ee
and
\be
g_L(z')=\eta_Lg_L(z)\eta_R^{-1}.
\label{gleq}
\ee
The matrices on the two sides of \eqref{gleq} appear in the form \eqref{T17},
and standard uniqueness properties of the constituents in this generalized Cartan
decomposition now imply that
\be
\hat p(z')=\hat p(z)
\label{H32}
\ee
and
\be
\eta_R(1)=\eta_R(2)=m\in\T_n.
\label{H35}
\ee
We continue by looking at the $(k+1,k)$
components of the equality \eqref{H33} for $k=1,\dots,n-1$ using that $\hat\alpha_{k+1,k}$
depends on $z$ only through $\hat p(z)$ and it never vanishes.
(This follows from \eqref{H22}-\eqref{H23} by utilizing that
$\hat \theta(x,z)_{k,k+1} =\theta(x,\hat p(z))_{k,k+1}$ by \eqref{H20},
which is nonzero for each $\hat p(z)\in\bar\cC_x$ as seen from \eqref{A1}.)
Putting
\eqref{H35} into \eqref{H33}, we obtain that $m=C\1_n$ with a scalar $C$,
and therefore
\be
\hat\alpha(z')=\hat\alpha(z).
\label{H36}
\ee
The rest is an inspection of this matrix equality.
In view of \eqref{H32} and the forms of $\Delta(z)$ \eqref{H22} and $\hat \alpha(z)$ \eqref{H23},
the last column of the equality \eqref{H36} entails that
\be
\hat \theta(x,z)_{nk} =\hat \theta(x,z')_{nk},\quad k=2,\dots,n,
\ee
where we re-instated the dependence on $x$ that was suppressed above.
One can check directly from the formulas \eqref{H12}, \eqref{H20} and
\eqref{A1}, \eqref{A2} that
\be
\hat\theta(x,z)_{nk}=\bar z_{k-1}F_k(x,\hat p(z)),\quad k=2,\dots,n,
\ee
where $F_k(x,\hat p(z))$ is a smooth, strictly positive function.
Hence we obtain that $z_j=z_j'$ for $j=1,\dots,n-1$.
With this in hand,
since the variable $z_n$ appears only in $\Delta(z)$, we conclude from \eqref{H36}
that $\Delta(z)=\Delta(z')$. This plainly implies that $z_n=z'_n$, whereby
the proof is complete.

We note in passing that by continuing the above line of arguments the
free action of $G_\mu$ is easily confirmed. Indeed, for $z'=z$ \eqref{gleq}
also implies, besides \eqref{H35}, the equalities $\eta_L(2)=m$ and
$\eta_L(1)\kappa(x)\hat\zeta(x,z)^{-1}=\kappa(x)\hat\zeta(x,z)^{-1}m$.
Since $m=C\1_n$, as was already established, we must have
$(\eta_L,\eta_R)=C(\1_{2n},\1_{2n})\in\Z_{2n}$ \eqref{Z2n}.
By using that the image of $\hat K$ intersects every $G_\mu$-orbit, we can conclude that
$\bar G_\mu$ \eqref{barGmu} acts freely on $\Phi_+^{-1}(\mu)$.
\end{proof}

\begin{remark}
\label{rem:3.7}
Observe from Theorem \ref{thm:3.6} that
$\hat S=\{\hat K(z)\mid z\in\hat M_c\}$
 is a global cross-section for the action of $G_\mu$ on
$\Phi_+^{-1}(\mu)$. Hence $\hat S$ carrying the pull-back of $\omega$ as well as $(\hat M_c, \hat \omega_c)$
yield globally valid models of the reduced phase space $(M,\omega_M)$.
The submanifold of $\hat S$ corresponding to $\hat M_c^o$ \eqref{H13} is gauge equivalent
to $S^o$ \eqref{H7} that features in Proposition \ref{prop:3.3}.
\end{remark}

\section{Discussion}
\label{sec:4}

In this paper we clarified the global structure of the reduced phase space
$M$ \eqref{T16}, and thus completed the previous analysis \cite{M} that dealt
with the submanifold parametrized by
$\cC_x\times\T_n$. In terms of the model $\hat M_c$ \eqref{hatM} of $M$,
the complement of the submanifold in question is simply the zero set of the product of the complex variables.
The phase space $\hat M_c$ and the embedding of $\cC_x\times\T_n$ into it coincides
with what occurs for the so-called $\widetilde{\mathrm{III}}$-system of Ruijsenaars
\cite{RIMS95,FK1}, which is the action-angle dual of the standard trigonometric
Ruijsenaars-Schneider system. This circumstance is not surprising in light of the
fact \cite{M} that the reduced `main Hamiltonian' arising from $\cH_1$ \eqref{T4}
is a $\widetilde{\mathrm{III}}$-type Hamiltonian coupled to external fields.
We display this Hamiltonian below after exhibiting the corresponding Lax matrices.

The unreduced free Hamiltonians $\cH_j$, for any $j\in \Z^*$, mentioned in
Section \ref{sec:2}, can be written alternatively as
\be
\cH_j(K)=\frac{1}{2j}\tr(K\bJ K^\dag\bJ)^j=\frac{1}{2j}\tr(K^\dag\bJ K\bJ)^j.
\label{N1}
\ee
One can verify (for example by using the standard $r$-matrix formula of the Poisson
bracket on the Heisenberg double \cite{STS}) that the Hamiltonian flow generated by
$\cH_j$ reads
\begin{align}
K(t_j)&=\exp\bigg[\ri t_j\bigg(
(K(0)\bJ K(0)^\dag\bJ)^j
-\frac{1}{2n}\tr(K(0)\bJ K(0)^\dag\bJ)^j\1_{2n}\bigg)\bigg]K(0)\\
&=K(0)\exp\bigg[\ri t_j\bigg((\bJ K(0)^\dag\bJ K(0))^j
-\frac{1}{2n}\tr(\bJ K(0)^\dag\bJ K(0))^j\1_{2n}\bigg)\bigg].
\label{N2}
\end{align}
Since the exponentiated elements reside in the Lie algebra $\su(n,n)$,
these alternative formulas show that the flow stays in $\SL(2n,\C)'$,
as it must, and imply that the building blocks $g_L$ and $g_R$ of
$K=b_Lg_R^{-1}=g_Lb_R^{-1}$ follow geodesics on $\SU(n,n)$, while
$b_L$ and $b_R$ provide constants of motion.
Equivalently, the last statement means that
\be
K\bJ K^\dag\bJ=b_L\bJ b_L^\dag\bJ
\quad\text{and}\quad
K^\dag\bJ K\bJ=(b_R^{-1})^\dag\bJ b_R^{-1}\bJ
\label{N3}
\ee
stay constant along the unreduced free flows.

To elaborate the reduced Hamiltonians, note that for an element $K$ of the form
\eqref{T19} we have
\be
(b_R^{-1})^\dag\bJ b_R^{-1}\bJ
=\begin{bmatrix}
e^{-2v} \1_n &-e^{-v} \alpha\\
e^{-v}\alpha^\dag&e^{2v}\1_n-\alpha^\dag\alpha
\end{bmatrix}.
\label{N5}
\ee
By using this, as explained in Appendix \ref{sec:B}, one can prove that on
$\Phi_+^{-1}(\mu)$ the Hamiltonians $\cH_j$ can be written (for all $j$),
up to additive constants, as linear combinations of the expressions
\be
h_k=\tr(\alpha^\dag\alpha)^k,\qquad k=1,\dots,n.
\label{N6}
\ee
Since in this way the Hermitian matrix $L=\alpha^\dag\alpha$ generates the
commuting reduced Hamiltonians, it provides a Lax matrix for the reduced system.
By inserting $\alpha$ from \eqref{H5}, we obtain the explicit formula
\be
\begin{split}
L(\hat p,e^{\ri\hat q})=
&(e^{2v}+e^{-2u})e^{-2\hat p}+(e^{2v}+e^{-2v})\1_n\\
&-\sqrt{e^{-2u}e^{-2\hat p}+ e^{-2v}\1_n}e^{\ri\hat q}\theta(x,\hat p)^{-1}
e^v\sqrt{e^{-2\hat p}+\1_n}\\
&- e^v\sqrt{e^{-2\hat p}+\1_n}\theta(x,\hat p)e^{-\ri\hat q}
\sqrt{e^{-2u}e^{-2\hat p}+ e^{-2v}\1_n}.
\end{split}
\label{N7}
\ee
On the other hand, the Lax matrix of Ruijsenaars's
$\widetilde{\mathrm{III}}$-system can be taken to be \cite{RIMS95,FK1}
\be
\tilde L(\hat p,e^{\ri\hat q})
=e^{\ri\hat q}\theta(x,\hat p)^{-1}+\theta(x,\hat p)e^{-\ri\hat q}.
\label{N8}
\ee
The similarity of the
structures of these Lax matrices as well as the presence of the external field
couplings in \eqref{N7} is clear upon comparison. The extension of the Lax matrix
$\alpha^\dagger \alpha$ \eqref{N7} to the full phase space $M\simeq \hat M_c$ is of course given by
$\hat\alpha^\dag\hat\alpha$ by means of \eqref{H23}.

The main reduced Hamiltonian found in \cite{M} reads as follows:
\begin{multline}
 \cH_1(K(\hat p,e^{\ri\hat q}))=
-\frac{e^{-2u}+e^{2v}}{2}\sum_{j=1}^ne^{-2\hat p_j}+\\
\qquad+\sum_{j=1}^n\cos(\hat q_j)\big[1+ (1+e^{2(v-u)})e^{-2\hat p_j}
+e^{2(v-u)}e^{-4\hat p_j}\big]^{\tfrac{1}{2}}\prod_{\substack{k=1\\(k\neq j)}}^n
\bigg[1-\frac{\sinh^2\big(\frac{x}{2}\big)}{\sinh^2(\hat p_j-\hat p_k)}
\bigg]^{\tfrac{1}{2}}.
\label{N9}
\end{multline}
Liouville integrability holds since the functional independence of the involutive family
obtained by reducing $\cH_1,\dots,\cH_n$ \eqref{N1} is readily established
and the projections of the free flows \eqref{N2} to $M$ are automatically complete.
Similarly to its analogue in \cite{FG}, the Hamiltonian \eqref{N9} can be identified as an
Inozemtsev type limit of a specialization of van Diejen's 5-coupling deformation
of the hyperbolic $\BC_n$ Sutherland Hamiltonian \cite{vD}. This fact suggests that
it should be possible to extract the local form of dual Hamiltonians from \cite{vDE}
and references therein, which contain interesting results about closely related
quantum mechanical systems and their bispectral properties. Indeed, in several
examples, classical Hamiltonians enjoying action-angle duality correspond to
bispectral pairs of Hamiltonian operators after quantization. In a future work,
we wish to explore the action-angle dual of the Hamiltonian \eqref{N9} in
the reduction framework and employ the duality together with the traditional
projection method for studying the associated dynamics.

\bigskip
\begin{acknowledgements}
This work was supported in part by the Hungarian Scientific Research
Fund (OTKA) under the grant K-111697. The work was also partially supported by COST
(European Cooperation in Science and Technology) in COST Action MP1405 QSPACE.
\end{acknowledgements}

\appendix

\section{Explicit formulas for the matrices $\theta$, $\zeta$, and $\kappa$}
\label{sec:A}

In this appendix we collect the explicit expressions of the matrices $\theta$,
$\zeta$, and $\kappa$ for the reader's convenience. More detailed information
about these matrices can be found in the paper \cite{FK1}.
At an arbitrary point $\hat p\in\bar\cC_x$ \eqref{H1} the components of
$\theta(x,\hat p)$ are defined as follows
\be
\theta(x,\hat p)_{jk}=\frac{\sinh\big(\frac{x}{2}\big)}{\sinh(\hat p_k-\hat p_j)}
\prod_{\substack{m=1\\(m\neq j,k)}}^n\bigg[\frac{\sinh(\hat p_j-\hat p_m-\frac{x}{2})
\sinh(\hat p_k-\hat p_m+\frac{x}{2})}{\sinh(\hat p_j-\hat p_m)
\sinh(\hat p_k-\hat p_m)}\bigg]^{\tfrac{1}{2}},\quad j\neq k,
\label{A1}
\ee
and
\be
\theta(x,\hat p)_{jj}=\prod_{\substack{m=1\\(m\neq j)}}^n
\bigg[\frac{\sinh(\hat p_j-\hat p_m-\frac{x}{2})\sinh(\hat p_j-\hat p_m+
\frac{x}{2})}{\sinh^2(\hat p_j-\hat p_m)}\bigg]^{\tfrac{1}{2}}.
\label{A2}
\ee
Note that $\theta(x,\hat p)$ is an orthogonal matrix of determinant 1.
Next, with the help of the vector $r(x,\hat p)\in\R^n$ defined by
\be
r(x,\hat p)_j=\bigg[\frac{1-e^{-x}}{1-e^{-nx}}\bigg]^{\tfrac{1}{2}}
\prod_{\substack{k=1\\(k\neq j)}}^n
\bigg[\frac{1-e^{2\hat p_j-2\hat p_k-x}}{1-e^{2\hat p_j-2\hat p_k}}\bigg]^{\tfrac{1}{2}},
\quad j=1,\dots,n,
\label{A3}
\ee
the entries of the real $n\times n$ matrix $\zeta(x,\hat p)$ can be written as
\be
\begin{split}
&\zeta(x,\hat p)_{nn}=r(x,\hat p)_n,\quad
\zeta(x,\hat p)_{ij}=\delta_{ij}-\frac{r(x,\hat p)_ir(x,\hat p)_j}{1+r(x,\hat p)_n},\\
&\zeta(x,\hat p)_{in}=-\zeta(x,\hat p)_{ni}=r(x,\hat p)_i,\quad i,j\neq n.
\end{split}
\label{A4}
\ee
Finally, by introducing the vector $v=v(x)$:
\be
v(x)_j=\bigg[\frac{n(e^x-1)}{1-e^{-nx}}\bigg]^{\tfrac{1}{2}}e^{-\tfrac{jx}{2}},\quad j=1,\ldots,n,
\label{A5}
\ee
the elements of the $n\times n$ matrix $\kappa(x)$ read
\be
\begin{gathered}
\kappa(x)_{nn}=\frac{v(x)_n}{\sqrt{n}},\quad
\kappa(x)_{ij}=\delta_{ij}-\frac{v(x)_iv(x)_j}{n+\sqrt{n}v(x)_n},\\
\kappa(x)_{in}=-\kappa(x)_{ni}=\frac{v(x)_i}{\sqrt{n}},\quad i,j\neq n.
\end{gathered}
\label{A6}
\ee
 It can be shown that both
$\kappa(x)$ and $\zeta(x,\hat p)$ are orthogonal matrices of
determinant 1.
The main feature of $\kappa(x)$ is that (with $\nu(x)$ in \eqref{T12})
the matrix $\kappa(x)^{-1} \nu(x) \nu(x)^\dagger \kappa(x)$ is diagonal.
This implies that $\eta_L(1)=\kappa(x)\tau\kappa(x)^{-1}\in\UN(n)$
 satisfies \eqref{T15} for any $\tau\in \T_n$, which we used in the main text (see \eqref{H27}).
 In the above we assumed that $x>0$, otherwise the definition of the matrices $\zeta$ and $\kappa$ would
 need different formulas.

\section{On the reduced Hamiltonians}
\label{sec:B}

In this appendix we prove the claim, made in Section \ref{sec:4}, that on
the momentum surface $\Phi_+^{-1}(\mu)$ the Hamiltonians $\cH_j$, $j\in\Z^\ast$
\eqref{N1} are linear combinations of $h_k$, $k=1,\dots,n$ \eqref{N6}.
This will be achieved by establishing the form of the integer powers of the matrix
displayed in \eqref{N5}, which we denote here by $\cL$, i.e.
\be
\cL
=\begin{bmatrix}
e^{-2v}\1_n&-e^{-v}\alpha\\
e^{-v}\alpha^\dag&e^{2v}\1_n-\alpha^\dag\alpha
\end{bmatrix}.
 \label{B1}
\ee

\begin{lemma}
\label{lemma:B1}
For any positive integer $j$, the $j$-th power of the $2n\times 2n$ matrix $\cL$
\eqref{B1} reads
\be
\cL^j=\begin{bmatrix}
\cL^j_{11}&\cL^j_{12}\\
\cL^j_{21}&\cL^j_{22}
\end{bmatrix},
\label{B2}
\ee
where $\cL^j_{11},\cL^j_{12},\cL^j_{21},\cL^j_{22}$ are $n\times n$ blocks of
the form
\be
\begin{split}
&\cL^j_{11}=\sum_{m=1}^ja_m^{(j)}(\alpha\alpha^\dag)^{j-m},\quad
\cL^j_{12}=\alpha\sum_{m=1}^jb_m^{(j)}(\alpha^\dag\alpha)^{j-m},\\
&\cL^j_{21}=\alpha^\dag\sum_{m=1}^jc_m^{(j)}(\alpha\alpha^\dag)^{j-m},\quad
\cL^j_{22}=(-1)^j(\alpha^\dag\alpha)^j+\sum_{m=1}^jd_m^{(j)}(\alpha^\dag\alpha)^{j-m},
\end{split}
\label{B3}
\ee
with the $4j$ coefficients $a_m^{(j)},b_m^{(j)},c_m^{(j)},d_m^{(j)}$, $m=1,\dots,j$
depending only on the parameter $v$.
\end{lemma}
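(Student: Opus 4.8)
The plan is to establish the block decomposition \eqref{B2}--\eqref{B3} by induction on $j$. For the base case $j=1$ it suffices to compare \eqref{B1} with \eqref{B3}: the asserted form holds with $a_1^{(1)}=e^{-2v}$, $b_1^{(1)}=-e^{-v}$, $c_1^{(1)}=e^{-v}$, $d_1^{(1)}=e^{2v}$, each a function of $v$ alone.

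The inductive step rests on the elementary intertwining identities
\be
\alpha\,(\alpha^\dag\alpha)^k=(\alpha\alpha^\dag)^k\,\alpha
\quad\text{and}\quad
\alpha^\dag\,(\alpha\alpha^\dag)^k=(\alpha^\dag\alpha)^k\,\alpha^\dag,
\qquad k\geq 0,
\label{Bintertwine}
\ee
obtained by iterating $\alpha\alpha^\dag\alpha=\alpha(\alpha^\dag\alpha)$. Assuming \eqref{B3} for $\cL^j$, I would compute $\cL^{j+1}=\cL\,\cL^j$ block by block. The $(11)$ and $(22)$ entries come out as polynomials in $\alpha\alpha^\dag$ and $\alpha^\dag\alpha$ respectively with no rearrangement needed, and the $(12)$ entry is visibly $\alpha$ times a polynomial in $\alpha^\dag\alpha$. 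The only place where \eqref{Bintertwine} is genuinely used is the $(21)$ entry, where one writes $(e^{2v}\1_n-\alpha^\dag\alpha)\alpha^\dag=\alpha^\dag(e^{2v}\1_n-\alpha\alpha^\dag)$ in order to extract $\alpha^\dag$ on the left and leave behind a polynomial in $\alpha\alpha^\dag$. In every block the new coefficients arise from the old ones solely by multiplication with the scalars $e^{-2v},-e^{-v},e^{-v},e^{2v}$ taken from $\cL$ and by addition, so they continue to depend only on $v$; a brief exponent count then confirms that the summations run from $m=1$ to $m=j+1$ as claimed.

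The single point requiring attention is the top-degree term of the $(22)$ block. The part of $\cL^{j+1}_{22}$ of degree $j+1$ in $\alpha^\dag\alpha$ arises uniquely from $-\alpha^\dag\alpha$ acting on the leading term $(-1)^j(\alpha^\dag\alpha)^j$ of $\cL^j_{22}$, giving $(-1)^{j+1}(\alpha^\dag\alpha)^{j+1}$; all remaining contributions to this block have degree at most $j$. This fixes the leading coefficient at $(-1)^{j+1}$ and thereby reproduces the normalization in \eqref{B3}, closing the induction. I anticipate no real difficulty beyond this accounting: the statement is an algebraic identity in the single matrix $\alpha$, and the relations \eqref{Bintertwine} are precisely what make the $2\times 2$ block pattern self-reproducing under left multiplication by $\cL$.
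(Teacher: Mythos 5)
Your proof is correct and follows exactly the route the paper takes: induction on $j$ with the single computation $\cL^{j+1}=\cL\,\cL^j$, which the paper states without detail and you work out block by block. The details you supply (the intertwining relations $\alpha(\alpha^\dag\alpha)^k=(\alpha\alpha^\dag)^k\alpha$, the coefficient bookkeeping, and the leading term $(-1)^{j+1}(\alpha^\dag\alpha)^{j+1}$ in the $(22)$ block) are precisely what the paper's phrase ``we simply calculate'' leaves to the reader.
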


\begin{proof}
We proceed by induction on $j$. For $j=1$ the statement clearly holds,
and supposing that \eqref{B2}-\eqref{B3} is valid for some
fixed integer $j>0$ we simply calculate the $(j+1)$-th power
$\cL^{j+1}=\cL\cL^j$. This proves the statement.
\end{proof}

Our claim of linear expressibility follows at once, that is for any positive integer
$j$ we have
\be
\cH_j=(-1)^jh_j
+\sum_{k=1}^{j-1}\frac{k}{j}(a_{j-k}^{(j)}+d_{j-k}^{(j)})h_k
+\frac{n}{2j}(a_j^{(j)}+d_j^{(j)}).
\label{B9}
\ee
Incidentally, one also obtains a recursion for the
coefficients $a_m^{(j)},b_m^{(j)},c_m^{(j)},d_m^{(j)}$
from the proof of Lemma \ref{lemma:B1}.
If they are required,
this should enable one to establish the values of the constants that occur in \eqref{B9}.

As for the negative powers of $\cL$, one readily checks that the inverse of
$\cL$ is
\be
\cL^{-1}
=\begin{bmatrix}
e^{2v}\1_n-\alpha\alpha^\dag&e^{-v}\alpha\\
-e^{-v}\alpha^\dag&e^{-2v}\1_n
\end{bmatrix},
 \label{B10}
\ee
which has essentially the same form as $\cL$ does, thus the blocks of $\cL^{-j}$
$(j>0)$ can be expressed similarly as in Lemma \ref{lemma:B1}.
In fact, conjugating $\cL^{-1}$ with the $2n\times 2n$ involutory
block-matrix
\be
\bC=\begin{bmatrix}
\0_n&\1_n\\
\1_n&\0_n
\end{bmatrix},
\label{B11}
\ee
leads to the following formula
\be
\bC\cL^{-1}\bC
=\begin{bmatrix}
e^{-2v}\1_n&-e^{-v}\alpha^\dag\\
e^{-v}\alpha&e^{2v}\1_n-\alpha\alpha^\dag
\end{bmatrix},
\label{B12}
\ee
which implies that the blocks of $\cL^{-j}$ are obtained from those of $\cL^j$
by reversing their order and interchanging the role of $\alpha$ and $\alpha^\dag$.
Furthermore, since $\tr((\alpha\alpha^\dag)^k)=\tr((\alpha^\dag\alpha)^k)$ we get
\begin{equation}
\cH_{-j}=-\cH_j\qquad \forall j\in\Z^\ast.
\label{B13}
\end{equation}

\newpage

\end{document}